\newtheorem{theorem}{Theorem}[section]
\newtheorem{example}[theorem]{Example}
\newtheorem{remark}[theorem]{Remark}
\newtheorem{definition}[theorem]{Definition}
\thanks{Inria Sophia Antipolis, France, konstantin.avratchenkov@inria.fr}%
\thanks{Inria Sophia Antipolis, France, vikasstar@gmail.com}
\thanks[sfn]{The authors' names are given in alphabetical order.}
\begin{document}
\makeRR   
\section{Introduction}
 Nash equilibrium is the most desirable solution concept  in non-cooperative game theory.
 When a strategic game is played  repeatedly over infinite horizon then the Nash equilibrium that is played in the long run
 depends on an initial action profile as well as the way all the players choose their actions at each time.
 Several discrete time dynamics have been studied in the literature to study the  Nash
 equilibrium selection in the long run.
 Young \cite{Young}  considered an  $n$ players strategic game where at each time all the  players make
a simultaneous move and each player chooses an action that is the
best response to  $k$  previous games among the $m$, $k\le m$,  most recent games in past.
In general this dynamics need not converge to a Nash equilibrium, it may stuck into a closed cycle.
Young  \cite{Young} also considered the case where at each time  with  small probability
each player makes mistake and chooses some non-optimal action. These mistakes add  mutations into the dynamics.
In general the mutations can be sufficiently small which leads to the definition of stability of Nash equilibrium as mutations vanish.
This type of stability  is known as stochastic stability.
Young proposed an algorithm to compute the stochastically  stable  Nash equilibria.
For $2\times 2$ coordination games he showed that the  risk dominant Nash equilibrium is stochastically stable.
Kandori et al. \cite{KMR} considered a different dynamic model where
 at each time each player plays with every other player in pairwise contest. The pairwise contest is given by
$2\times 2$ symmetric matrix game and each player chooses an action which has higher expected average payoff. The mutations
are present into dynamics due to wrong actions taken by the players.
 For $2\times 2$ coordination games they showed that a risk dominant
Nash equilibrium is stochastically stable. That is, for $2\times 2$ coordination games  the dynamics given by Young \cite{Young}
and Kandori et al. \cite{KMR}
selects the same Nash equilibrium.
 Fudenberg et al. \cite{Fudenberg} proposed a dynamics where at each time only one player is selected to choose actions.
 The mutations with small probability
 also occur  at each time. The risk dominant Nash equilibrium
in $2\times 2$ coordination games need not be stochastically stable under this dynamics.

The Nash equilibrium concept is inadequate for the situations where players can a priori communicate, being in a position to
form a coalition and jointly deviate in a coordinated way.
To capture such situations the strong Nash equilibrium (SNE) introduced by Aumann \cite{Aumann} is
an adequate solution concept.
From an SNE there is no coalition that can deviate to a new action profile
such that at new action profile the actions of all players from outside of the coalition are same
as at SNE and all the players from the coalition are strictly benefited.
There is another equilibrium notion that is stronger than  the SNE. Such equilibrium is called as
strict strong Nash equilibrium (SSNE). From an SSNE there is no coalition that can
 deviate to a new action profile  such  that at new action profile the actions of all players from outside of the coalition are same
as at SSNE and all the players from coalition get at least as much as at SSNE and
at least one player is strictly benefited. It is clear that an SSNE is always an SNE.
As motivated from the application of SNE in network formation games by Dutta and Mutuswami \cite{Dutta-Mutu} and
SSNE in network formation games by Jackson and van den Nouweland
\cite{Jack-Nou}, Jackson \cite{Jackson2} we restrict ourselves to only pure actions.
A network that is stable against the deviations of all coalitions is called as strongly stable network
and under top convexity condition on payoff functions it indeed exists as shown by Jackson and van den Nouweland
\cite{Jack-Nou}.
An SNE need not always exist and in such case there exists
some set of action profiles forming a closed cycle such that it is possible to reach from one action profile to another via
sequence of  improving deviations from the coalitions; and it is not possible to reach an action profile outside of the closed cycle
from an action profile belonging to closed cycle
via  improving deviations from the coalitions.

There are many dynamics for equilibrium selection in the literature, as discussed before, describing various situations of the dynamic play.
 To the best of our knowledge so far  no dynamics has been proposed that captures the situation where  at each time
players are allowed to form a coalition  and make a move in a coordinated way.
In this paper we propose a CBR dynamics  where at each time players are allowed to form a coalition and  make
a joint deviation from the current action profile if it is strictly beneficial for all the members of the coalition.
We assume that the coalition formation is random and at each time only one coalition can be formed.
We also consider the situation where at each time the formed  coalition
makes wrong decision with small probability, i.e., they make a move to an action profile
where all the players from the coalition are not strictly benefited. These mistakes work as mutations and add perturbations
into CBR dynamics.
We prove that the perturbed CBR dynamics selects all strong Nash equilibria and closed cycles in the long run as mutations vanish,
i.e., all strong Nash equilibria and closed cycles are stochastically stable.
For $2\times 2$ symmetric coordination games this dynamics always selects a payoff dominant Nash equilibrium instead of
risk dominant Nash equilibrium because a payoff dominant Nash equilibrium is an SNE.
The similar CBR dynamics can be given for the
case where each time a coalition deviate from a current action profile such that all  players from the coalition are at least
as well off
at new action profile  and at least one player is strictly better off.
Under such dynamics all strict strong Nash equilibria
and closed cycles are stochastically stable.

We apply CBR dynamics corresponding to SSNE to network formation games where nodes (players) of a network form a coalition and make a move
to a new network if it offers each player at least as much as it is in the current network and at least one player  gets strictly
better payoff. The mutations are present due to the wrong decisions taken by the coalitions.
We prove that all strongly stable networks and closed cycles are stochastically
stable.

The paper is organized as follows. Section \ref{model} contains the model and few  definitions.
We describe the CBR dynamics in Section \ref{Dynamicplay}. Section \ref{net-games}
contains the application of CBR dynamics to network formation games. We conclude our paper in Section \ref{conclusion}.
As a by-product, we give an algorithm
to compute an SNE in Appendix~\ref{algo}.

 \section{The Model}\label{model}
 We consider an $n$-player strategic game whose components are defined as follows:

\begin{enumerate}
 \item $N=\{1,2,\cdots,n\}$ is a finite set of players.

 \item $A_i$ is a finite set of actions of player $i$ and its  element is denoted by $a_i$. We denote $A=\prod_{i=1}^n A_i$ as a
 set of all action profiles and  $a=(a_1,a_2,\cdots,a_n)$ denotes an element of $A$. Let $\mathcal{S}$ be
 the set of all coalitions among players.
 For a coalition $S\in \mathcal{S}$, define
 $A_S=\prod_{i\in S} A_i$ whose element  is denoted by $a_S$ and $a_{-S}$ denotes an action profile of players outside $S$.

 \item $u_i:A\rightarrow \mathbb{R}$ is a payoff  function of player $i$. Specifically, player $i$ receives payoff
 $u_i(a_1,a_2,\cdots,a_n)$
 when  each player $i$, $i=1,2,\cdots,n$, chooses action $a_i$.
\end{enumerate}
In non-cooperative  games, the Nash equilibrium is stable against unilateral deviations, i.e.,  no player has an incentive
to deviate unilaterally from it.
But, the  Nash equilibrium fails to capture the situation where a priori the players can communicate with each other. In such cases
 some of the players can form a coalition
and jointly deviate from a current action profile
 if at new action profile  each player from the coalition is strictly benefited. In some cases players also make a joint deviation
 from a current action profile if at new action profile all the players of coalition are at least as well off and at least one player
 is strictly better off. Such deviations lead to the definitions of strong Nash equilibrium \cite{Aumann} and strict strong
 Nash equilibrium which we define next. As motivated from the application of SNE in network formation games by Dutta and
 Mutuswami \cite{Dutta-Mutu} and
 application of SSNE in network formation games by Jackson and van den Nouweland
\cite{Jack-Nou},  Jackson \cite{Jackson2} we restrict ourselves to
 pure actions.

 \begin{definition}[Strong Nash Equilibrium]\label{SNE}
  An action profile $a^*$ is said to be a strong Nash equilibrium if there is no $S\in \mathcal{S}$ and $a\in A$ such that
  \begin{enumerate}
   \item $a_i=a_i^*, \ \forall \ i\notin S$.
   \item $u_i(a)> u_i(a^*), \ \forall \ i\in S$.
  \end{enumerate}
 \end{definition}
 Let $A(S,a)$ be the set of all  action profiles reachable
from $a$ via deviation of coalition $S$. It  is defined as,
$$A(S,a)=\{a'|a'_i=a_i, \ \forall \ i\notin S \ \mbox{and} \ a'_i\in A_i, \ \forall \ i\in S\}.$$
A coalition always has option to do nothing, so $a\in A(S,a)$.
Let $\mathcal{I}_1(S,a)$ be the set of improved action profiles reachable from an action profile $a$ via deviation of coalition $S$,~i.e.,
\begin{equation}\label{Impr_SNE}
\mathcal{I}_1(S,a)=\{a'|a'_i=a_i, \ \forall \ i\notin S \ \mbox{and} \ u_i(a')> u_i(a), \ \forall \ i\in S \}.
\end{equation}
For an improved action profile $a'\in \mathcal{I}_1(S,a)$, an action profile $a'_S$ of all players from $S$ is called as
a \mbox{better-response}
of coalition $S$ against a fixed action profile $a_{-S}$ of the players outside $S$.
Define, $\overline{\mathcal{I}}_1(S,a)=A(S,a)\setminus \mathcal{I}_1(S,a)$ as a set of all action profiles due to the erroneous
decisions of coalition $S$.
It is clear that $a\notin \mathcal{I}_1(S,a)$,
so $a\in \overline{\mathcal{I}}_1(S,a)$. That is, $\overline{\mathcal{I}}_1(S,a)$ is always nonempty for all $S$ and $a$.
An SNE need not always exist. In such a case there exists some set of action profiles lying on a closed cycle
and all such action profiles  can be reached from each other via an improving path. The definitions of
closed cycle and improving path are as follows:

\begin{definition}[Improving Path]\label{impr-path}
An improving path from $a$ to $a'$ is a sequence of action profiles and coalitions $a^1, S_1, a^2,  \cdots, a^{m-1}, S_{m-1}, a^m$ such that
$a^1=a$, $a^m=a'$ and \mbox{$a^{k+1}\in \mathcal{I}_1(S_k,a^k)$} for all $k=1,2,\cdots,m-1$.
\end{definition}

\begin{definition}[Cycles]\label{CC}
 A set of action profiles $C$ form a  cycle if for any $a\in C$ and $a'\in C$
 there exists  an improving path connecting $a$ and $a'$.
 A cycle is said to be a closed cycle if no action profile in $C$ lies on an improving
 path leading to an action profile that is not in $C$.
\end{definition}

\begin{theorem}\label{SNE_thm}
  There always exists a strong Nash equilibrium or a closed cycle of action profiles.
\end{theorem}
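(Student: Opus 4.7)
The plan is to reason about the finite directed graph $G$ whose vertex set is $A$ and whose directed edges are the one-step improving deviations: put an edge $a \to a'$ whenever there exists a coalition $S \in \mathcal{S}$ with $a' \in \mathcal{I}_1(S,a)$. In this language, an action profile is an SNE precisely when it is a sink of $G$ (no outgoing edges), and an improving path in the sense of Definition \ref{impr-path} is just a directed walk in $G$.

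First I would handle the easy case: if $G$ has a sink, that vertex is an SNE by Definition \ref{SNE}, and we are done. So assume from now on that every $a \in A$ has at least one outgoing edge in $G$. I would then pass to the condensation of $G$, i.e., the DAG whose vertices are the strongly connected components (SCCs) of $G$. Since $A$ is finite, this DAG is finite and acyclic, hence it has at least one sink SCC $C$ (an SCC with no outgoing edges to any other SCC).

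Next I would argue that $C$ is a closed cycle in the sense of Definition \ref{CC}. Strong connectedness of $C$ in $G$ gives, for any two profiles $a, a' \in C$, a directed walk in $G$ from $a$ to $a'$ staying inside $C$; unpacking the definition of the edges yields an improving path from $a$ to $a'$, which is the first requirement in Definition \ref{CC}. The fact that $C$ is a sink SCC means no edge of $G$ leaves $C$, which is exactly the closedness condition (no improving path can exit $C$). So $C$ satisfies both conditions of a closed cycle, provided $C$ actually contains improving moves, i.e., provided it is not a degenerate single-vertex SCC with no internal edges.

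The only technical point — and the one I would write out carefully — is ruling out that degenerate case. From the strict inequality $u_i(a') > u_i(a)$ in \eqref{Impr_SNE}, one gets $a \notin \mathcal{I}_1(S,a)$, so $G$ has no self-loops. Consequently, a singleton SCC has no internal edges. But we already assumed $G$ has no sinks, so every vertex of the sink SCC $C$ has an outgoing edge in $G$; since $C$ is a sink of the condensation, that edge must stay in $C$. Combined with the absence of self-loops, this forces $|C| \geq 2$ and $C$ to contain a genuine directed cycle, completing the proof. The whole argument is essentially structural graph theory on a finite graph; the only place to be careful is the no-self-loop observation that keeps singleton sink SCCs from sneaking through as spurious ``closed cycles''.
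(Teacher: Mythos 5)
Your proof is correct, and it is a more rigorous rendering of the same underlying finiteness argument the paper uses. The paper's own proof proceeds in two informal steps: first it walks along improving deviations until it either hits an SNE or revisits a profile (producing a cycle), and then, assuming no SNE exists, it simply \emph{asserts} that a maximal set of mutually reachable profiles with no outgoing improving path ``must exist'' and is a closed cycle. Your condensation argument is precisely the formal content behind that assertion: the strongly connected components of the improving-deviation digraph form a finite DAG, a finite DAG has a sink, and a sink SCC is exactly such a maximal set. You also handle a point the paper glosses over entirely, namely that a singleton sink SCC is not a spurious ``closed cycle'' but rather an SNE --- your observation that $a \notin \mathcal{I}_1(S,a)$ rules out self-loops, so a non-SNE vertex in a sink SCC forces $|C| \ge 2$ and a genuine directed cycle. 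The trade-off is only one of style: the paper's walk-based narrative is shorter and matches the algorithmic flavour of its appendix, while your version makes the case analysis airtight and would be the one to cite if a referee pressed on why the maximal set exists.
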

\begin{proof}
An action profile is an SNE if and only if it is not possible for any  coalition to make an improving deviation
from it to another action profile.
 So, start at an action profile. Either it is SNE or there exists a coalition that can make an improving deviation
to another action profile. In the first case result is established. For the second case the same thing holds, i.e., either this
 new action profile is an SNE or there exists a coalition that can make an improving deviation to another action
 profile. Given the finite number of action profiles, the above process either finds an action profile which is an
 SNE
 or it reaches to the starting action profile, i.e., there exists a cycle. Thus, we have proved that there always exists  either
 an SNE or a cycle. Suppose there are no strong Nash equilibria.
 Given the finite number of action profile and non-existence of strong Nash equilibria there must
 exists a maximal set $C$ of action profiles such that for any $a\in C$ and $a'\in C$
 there exists  an improving path connecting $a$ and $a'$ and  no action profile in $C$ lies on an improving
 path leading to an action profile that is not in $C$. Such a set $C$ is a closed cycle.
\end{proof}
An SSNE can be defined similarly.
An action profile $a^*$ in Definition \ref{SNE} is said to be SSNE
 if the condition 1 is same and  the condition 2 is
 $u_i(a)\ge u_i(a^*)$ for all $i\in S$ with at least one strict inequality. That is, $a^*$ is an SSNE if it is
 not possible for any coalition $S\in \mathcal{S}$
 to deviate from $a^*$ to some $a\in A$ such that the actions of all players outside $S$
 are same in both $a$ and $a^*$ and at $a$ all players from $S$ are at least as well off as  at $a^*$  and
 the payoff of at least one player at $a$ is better than at $a^*$.
 In this case, for the given action profile  $a$ and  coalition $S$ the set of improved action profiles
 $\mathcal{I}_2(S,a)$
 is defined~as,
 \begin{equation}\label{Impr_SSNE}
  \mathcal{I}_2(S,a)=\{a'|a_i'=a_i, \ \forall \ i\in S, \ \mbox{and}\ u_i(a')\ge u_i(a),\ \forall \ i\in S,
u_j(a')>u_j(a), \ \mbox{for some} \ j\in S\}.
 \end{equation}
 and
 $\overline{\mathcal{I}}_2(S,a)=A(S,a)\setminus \mathcal{I}_2(S,a)$. The definitions of improving path and cycles can be defined
 analogously to previous case. A result similar to Theorem \ref{SNE_thm} holds, i.e., there always exists at least
 an SSNE or a closed cycle of action profiles. An SSNE is always an SNE, i.e., the set of strict strong Nash equilibria
 is a subset of the set of strong Nash equilibria.
 An SNE is a weakly Pareto optimal
 and an SSNE is a Pareto optimal.
 Now, we give few examples illustrating the presence of SNE, SSNE and closed cycle.

\begin{example}
Consider a two player game
 $$
 \begin{matrix}
\  b_1 \  ~~~~~~~~~~~~~b_2 \\
\begin{matrix} a_1\\a_2\end{matrix}
\begin{pmatrix}
  (-2,-2) & (-10,-1)
   \vspace{.15cm} \\
  (-1,-10) & (-5,-5)
\end{pmatrix}.
\end{matrix}
 $$
\end{example}
\noindent The above game represents a famous example of prisoner's dilemma. Here $(a_2,b_2)$ is the only  Nash equilibrium
that is not an SNE because both player can jointly deviate to $(a_1,b_1)$ where both of them are strictly better off.
So, in this game there is no SNE and SSNE. The closed cycle of action profile is given in Figure \ref{Cl-cyc1}

\begin{figure}[htb]
\begin{center}
\begin{tikzpicture}[->,>=stealth',shorten >=1pt,auto,node distance=3.2cm,
 thick,main node/.style={circle,fill=blue!20,draw,font=\footnotesize}]
  \node[main node] (1) {$(a_1,b_1)$};
  \node[main node] (2) [below left of=1] {$(a_2,b_1)$};
  \node[main node] (3) [below right of=2] {$(a_2,b_2)$};
  \node[main node] (4) [below right of=1] {$(a_1,b_2)$};

  \path[every node/.style={font=\footnotesize}]
    (1) edge node [right] {\{2\}} (4)
        edge node[left] {\{1\}} (2)
    (2) edge node [left] {\{2\}} (3)
    (3) edge node [right] {\{1,2\}} (1)
    (4) edge node [right] {\{1\}} (3);
\end{tikzpicture}
\caption{Closed Cycle}\label{Cl-cyc1}
\end{center}
\end{figure}
\noindent A directed edge $(a_2,b_2)\xrightarrow{\{1,2\}} (a_1,b_1)$ of Figure \ref{Cl-cyc1}
represents a deviation by coalition $\{1,2\}$. The other
directed edges of the closed cycle are similarly defined.


\begin{example}\label{ex1}
Consider a two player game
 $$
 \begin{matrix}
\  b_1 \  ~~~~~b_2 \  ~~~~~b_3\\
\begin{matrix} a_1\\a_2\\ a_3\end{matrix}

 \begin{pmatrix}
  (4,4) & (0,0) & (0,0)
   \vspace{.1cm} \\
  (0,0) & (4,5) & (1,6)
  \vspace{.1cm}\\
  (0,0) & (2,5) & (6,1)
 \end{pmatrix}.
 \end{matrix}
 $$
\end{example}
\noindent This example has both SNE and closed cycle. The action profile  $(a_1,b_1)$ is an SNE and the closed cycle is defined as below:

\begin{figure}[htb]
\begin{center}
\begin{tikzpicture}[->,>=stealth',shorten >=1pt,auto,node distance=3.2cm,
 thick,main node/.style={circle,fill=blue!20,draw,font=\footnotesize}]

  \node[main node] (1) {$(a_2,b_2)$};
  \node[main node] (2) [below right of=1] {$(a_2,b_3)$};
  \node[main node] (3) [below left of=2] {$(a_3,b_3)$};
  \node[main node] (4) [below left of=1] {$(a_3,b_2)$};

  \path[every node/.style={font=\footnotesize}]
    (1) edge node [right] {\{2\}} (2)
    (2) edge node [right] {\{1\}} (3)
    (3) edge node [left] {\{2\}} (4)
    (4) edge node [left] {\{1\}} (1);
\end{tikzpicture}
\caption{Closed Cycle}\label{cl-cyc2}
\end{center}
\end{figure}
\noindent But, $(a_1,b_1)$ is not an SSNE because according to the  improved action profile set defined by \eqref{Impr_SSNE},
both player can make a joint deviation from  action
profile $(a_1,b_1)$ to $(a_2,b_2)$. But, if we change the payoff vector corresponding to $(a_2,b_2)$ from
$(4,5)$ to $(4-\alpha,5)$ for $\alpha>0$ then $(a_1,b_1)$ is also an SSNE.

\section{Dynamic play} \label{Dynamicplay}
We consider the situation where $n$ players play the strategic game defined in Section~\ref{model}.
We assume that the players can a priori communicate  with each other and hence they can form a coalition
and jointly deviate from the current action profile to a new action profile if new action profile
is strictly beneficial for all members of coalition.
We consider the coalition formation over infinite horizon.
That is, at each time a coalition is randomly formed and it makes a deviation from current action profile
to a new action profile
such that at new action profile
the actions of the players outside the coalition remain same as before and
each player of the coalition
is strictly benefited. If there is no such improved action profile for a coalition then it does not deviate.
The same thing repeats at next stage and it continues for infinite horizon.
Such deviations define a coalitional better-response (CBR) dynamics.
We assume that the coalition formation is random and at each time only one coalition can be formed. If there are more than one improved
action profiles for a coalition then each improved action profile can be chosen with positive probability.
That is, the CBR dynamics is stochastic.
The CBR dynamics defines a Markov chain over a finite set of action profiles $A$.
We also assume that at each time  selected coalition makes mistake and make a joint deviation
to an action profile where all members
of the coalition are not strictly benefited. This happens with very small probability. Such mistakes add mutations into CBR dynamics.
The mutations add another level of stochasticity in the CBR dynamics and as a result
we have perturbed Markov chain, see e.g., \cite{AFH02,AFH13}.
We are interested in the action profile which is going to be selected by the CBR dynamics as mutations vanish.
We next describe the stochastic CBR dynamics as discussed above.

\subsection{A stochastic CBR dynamics  without mistakes}\label{Stoc_DP}
At each time $t=0,1,2,\cdots$ a coalition $S_t$ is selected randomly with probability
$p_{S_t}>~0$.
We assume that at each time  selected coalition  makes an improving deviation from current action profile $a^t$,
i.e., at time $t+1$, the new action profile is
$a^{t+1}\in \mathcal{I}_1(S_t,a^t)$ with probability
$p_{\mathcal{I}_1}(a^{t+1}|S_t,a^t)$ where $p_{\mathcal{I}_1}(\cdot|S_t,a^t)$ is a probability measure over finite set
$\mathcal{I}_1(S_t,a^t)$.
When there are no improving deviations for coalition  $S_t$   then $a^{t+1}=a^t$. Let $X_t^0$ denotes the action profile
at time $t$, then $\{X^0_t\}_{t=0}^\infty$ is a finite Markov chain on set $A$.  The transition law $P^0$ of the Markov chain
 is defined as follows:
\begin{align}\label{TPM_Static_unperturb}
 P^0(X_{t+1}^0=a'|X_t^0=a)&=\sum_{S\in \mathcal{S}; \mathcal{I}_1(S,a)\neq \phi}p_S ~ p_{\mathcal{I}_1}(a'|S,a)1_{\mathcal{I}_1(S,a)}(a')
 +\sum_{S\in \mathcal{S}; \mathcal{I}_1(S,a)= \phi} p_{S}1_{\{a'=a\}}(a'),
\end{align}
where $1_{B}$ is an indicator function for a given set $B$. It is clear that the strong Nash equilibria and closed cycles
are the recurrent classes of $P^0$. An SNE corresponds to an absorbing state of $P^0$ and
a closed cycle corresponds to a recurrent class of $P^0$ having more than one action profiles.

From Example \ref{ex1} it is clear that in general the closed cycles together with strong Nash equilibria can be present in a game.
   In that case the CBR
 dynamics need not converge. In Example \ref{ex1}
the  CBR dynamics need not converge to SNE $(a_1,b_1)$ because once CBR dynamics enter into  closed
cycle given in Figure \ref{cl-cyc2} then it will never come out of it.
The closed cycle $C=\{(a_2,b_2),(a_2,b_3),(a_3,b_3),(a_3,b_2)\}$ is a recurrent class and $(a_1,b_1)$ is an absorbing
state of Markov chain $P^0$ corresponding to the game given in Example \ref{ex1}.

We call a game acyclic if it has no closed cycles. The acyclic games include coordination games.
There  exists at least one SNE for acyclic games
from Theorem \ref{SNE_thm}.
For acyclic games the Markov chain defined by \eqref{TPM_Static_unperturb} is absorbing. Hence from the theory of Markov chain the
CBR dynamics given in Section \ref{Stoc_DP}
will be at SNE in the long run no matter from where it starts  \cite{Kemeny}.

\subsection{A stochastic CBR dynamics  with mistakes}\label{Stoc_per_DP}
Now, we assume that at each time $t$ the  selected coalition $S_t$ makes error in  making a deviation from $a^t$
and as a result it  moves to an action profile where some player(s)
in the coalition $S_t$ are not strictly better off.
We assume that at action profile $a^t$,
coalition $S_t$ makes error with $f(S_t,a^t)\varepsilon$ probability, where $f:\mathcal{S} \times A\rightarrow (0,\infty)$
and $0<~\varepsilon<~\frac{1}{M}$ with $M=\max_{S\in \mathcal{S},a\in A}f(S,a)$. The factor $f(S_t,a^t)$ shows the dependence of
coalition $S_t$ and
current action  profile $a^t$. The factor $\varepsilon$ determines the probability with which players in general make mistakes.
These mistakes add mutations to CBR dynamics and as a result we have
perturbed Markov chain $\{X_t^\varepsilon\}_{t=0}^\infty$ .
So, at time $t+1$ with probability $(1-f(S_t,a^t)\varepsilon)p_{\mathcal{I}_1}(a^{t+1}|S_t,a^t)$
the perturbed Markov chain switches to $a^{t+1}\in \mathcal{I}_1(S_t,a^t)$  and with
probability $f(S_t,a^t)\varepsilon  p_{\overline{\mathcal{I}}_1}(a^{t+1}|S_t,a^t)$
it switches to
$a^{t+1}\in \overline{\mathcal{I}}_1(S_t,a^t)$ ; $p_{\overline{\mathcal{I}}_1}(\cdot|S_t,a^t)$ is a probability measure over
finite set $\overline{\mathcal{I}}_1(S_t,a^t)$.
In the situation where there are no improved action profiles for coalition $S_t$, then $a^{t+1}=a^t$
with probability $1-f(S_t,a^t)\varepsilon$ and  $a^{t+1}\in \overline{\mathcal{I}}_1(S_t,a^t)\setminus \{a^t\}$
with probability $f(S_t,a^t)\varepsilon  p_{\overline{\mathcal{I}}_1\setminus \{a^t\}}(a^{t+1}|S_t,a^t)$;
$p_{\overline{\mathcal{I}}_1\setminus \{a^t\}}(\cdot|S_t,a^t)$ is a probability measure over finite set
$\overline{\mathcal{I}}_1(S_t,a^t)\setminus \{a^t\}$.
The transition law $P^\varepsilon$ of perturbed Markov chain is defined as below:

\begin{align}\label{TPM_Static}
 P^\varepsilon(X_{t+1}^\varepsilon=a'|X_{t+1}^\varepsilon=a)&=
 \sum_{S\in \mathcal{S}; \mathcal{I}_1(S,a)\neq \phi}p_S \big((1-f(S,a)\varepsilon)
 p_{\mathcal{I}_1}(a'|S,a)1_{\mathcal{I}_1(S,a)}(a')\nonumber\\
&\hspace{3cm} +f(S,a)\varepsilon p_{\overline{\mathcal{I}}_1}(a'|S,a) 1_{\overline{\mathcal{I}}_1(S,a)}(a')\big)\nonumber\\
 &\hspace{1cm}+\sum_{S\in \mathcal{S}; \mathcal{I}_1(S,a)= \phi} p_{S}\big((1-f(S,a)\varepsilon)1_{\{a'=a\}}(a')\nonumber\\
 &\hspace{3cm}+f(S,a)\varepsilon p_{\overline{\mathcal{I}}_1\setminus \{a\}}(a'|S,a)
 1_{\overline{\mathcal{I}}_1(S,a)\setminus \{ a\}}(a')\big),
\end{align}
for all $a, a'\in A$.


Given all possible coalitional moves and nonzero mutations, it is possible to reach  one action profile from  another with
positive probability in one step. This implies that the perturbed Markov chain $\{X_t^\varepsilon\}_{t=0}^\infty$
is aperiodic and irreducible. Hence, there exists a unique stationary distribution $\mu^\varepsilon$ for perturbed
Markov chain. However, when $\varepsilon=0$, there can be several stationary distributions corresponding to different
SNEs or closed cycles. Such Markov chains are called singularly perturbed Markov chains \cite{AFH02,AFH13}.
We are interested in the action profiles to which stationary distribution $\mu^\varepsilon$ assigns positive probability as
$\varepsilon\rightarrow 0$. This leads to the definition of a stochastically stable action profile.

 \begin{definition}
 An action profile $a$ is  stochastically stable relative to process $P^\varepsilon$ if
 $\lim_{\varepsilon\rightarrow 0} \mu_a^\varepsilon >~0$.
\end{definition}
We recall few definitions from  \cite{Young}.
From \eqref{TPM_Static}, we have $P^\varepsilon(a'|a)>0$ for all $a,a'\in A$.  The one step resistance from an action profile
$a$ to an action profile $a'\neq a$
is defined as the minimum number of mistakes (mutations) that are required for the transition from $a$ to $a'\neq a$ and it is denoted
by $r(a,a')$.
From \eqref{TPM_Static} it is clear that the transition from
$a$ to $a'$ has the probability of order $\varepsilon$ if $a'\notin \mathcal{I}_1(S,a)$
for all $S$ and thus has resistance 1 and is of order 1 otherwise, so has resistance 0.
So, in our setting $r(a,a')\in \{0,1\}$ for all $a,a'\in A$.
A zero resistance between two action profiles corresponds to a transition with positive probability under $P^0$.
One can view the action profiles as the nodes of a directed graph that has no self loops and the weight
of a directed edge between two different nodes
is represented by one step resistance between them.
Since $P^\varepsilon$ is an irreducible Markov chain then there must exist at least one directed path between any two recurrent
classes $H_i$ and $H_j$ of $P^0$ which starts from $H_i$ and ends at $H_j$. The resistance of any path is defined as the sum of the
weights of the corresponding edges. The  resistance of
a path   which is minimum among all paths from $H_i$ to $H_j$ is called as resistance from $H_i$
to $H_j$ and it is denoted by $r_{ij}$. The resistance from
 any action profile $a^i\in H_i$ to any action profile $a^j\in H_j$ is $r_{ij}$
 because inside $H_i$ and $H_j$ action profiles are connected with
 a path of zero resistance. Here $r_{ij}=1$
 because given all possible coalitional deviations it is always possible to reach from an action profile that belongs to $H_i$ to an action
profile belonging to $H_j$ in exactly 1 mutation.

 Now we recall the definition  of stochastic potential of a recurrent class $H_i$ of $P^0$ from \cite{Young}.
 It can be computed by restricting to a reduced graph.
Construct a  graph $\mathcal{G}$ where total number of nodes are the number of recurrent classes of $P^0$(one action profile
from each recurrent class)
 and a directed edge from $a^i$ to $a^j$  is weighted by $r_{ij}$. That is, the resistance of a directed  edge from
 $a^i$ to $a^j$ is 1.
 Take a node $a^i\in \mathcal{G}$  and consider all the spanning trees such that from every node
$a^j\in \mathcal{G}$, $a^j\neq a^i$,
there is a unique path directed from $a^j$ to $a^i$. Such spanning trees are called as $a^i$-trees.
The resistance of an $a^i$-tree  is the sum of the resistances of its edges.
The stochastic potential of $a^i$ is the resistance of an $a^i$-tree  having minimum resistance among all $a^i$-trees.
 The stochastic potential of each node in $H_i$ is same  \cite{Young},
which is a stochastic potential of $H_i$.
Suppose there are $J$ number of recurrent classes of $P^0$, then, an $a^i$-tree will have $J-1$ number of edges
and the resistance of each edge is 1. So, the resistance of each $a^i$-tree is $J-1$. This implies that the stochastic potential
of recurrent class $H_i$ is  $J-1$ and this is true for all the recurrent classes. So, in our case the stochastic potential of
all the recurrent classes of $P^0$ is same.

\begin{theorem}\label{SS_static_games}
All strong Nash equilibria and closed cycles of an $n$-player finite strategic game are stochastically stable.
\end{theorem}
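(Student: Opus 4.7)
The plan is to apply Young's characterization of stochastic stability in singularly perturbed Markov chains, using the resistance and stochastic potential calculations already set up in the paragraphs preceding the statement. Recall from Young \cite{Young} that a state is stochastically stable relative to $P^\varepsilon$ if and only if it lies in a recurrent class of the unperturbed chain $P^0$ whose stochastic potential is minimum among all recurrent classes. So the proof reduces to identifying the recurrent classes of $P^0$ and comparing their stochastic potentials.

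First I would note that the recurrent classes of $P^0$ are exactly the strong Nash equilibria (absorbing states) together with the closed cycles: an SNE admits no improving coalitional deviation and hence is absorbing under \eqref{TPM_Static_unperturb}, while a closed cycle $C$ is by Definition \ref{CC} a set that is internally connected by improving paths and externally unreachable by any such path, which is precisely the definition of an irreducible recurrent class for $P^0$. Theorem \ref{SNE_thm} guarantees that at least one such recurrent class exists; let $H_1,\dots,H_J$ enumerate them.

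Next I would compute the inter-class resistances $r_{ij}$. The transition law \eqref{TPM_Static} shows that any ordered pair of distinct action profiles $(a,a')$ has either $r(a,a')=0$ (when $a'$ is reachable by an improving deviation of some coalition under $P^0$) or $r(a,a')=1$ (a single mistake suffices, since the mutation part permits any $a'\in A(S,a)$ for the grand coalition $S=N$, and by choosing $S=N$ we can reach any $a'\in A$ in one mutation). Hence from any $a^i\in H_i$ to any $a^j\in H_j$ one can take an edge of resistance $1$ directly into $H_j$ and then move freely within $H_j$ at zero resistance, giving $r_{ij}=1$ for all $i\neq j$.

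Finally I would plug these into the stochastic potential formula. On the reduced graph $\mathcal{G}$ with one node per recurrent class and every directed edge of weight $1$, any $a^i$-tree is a spanning in-tree with exactly $J-1$ edges, each of resistance $1$, so its total resistance equals $J-1$ regardless of the root. Consequently every recurrent class has stochastic potential $J-1$, and all recurrent classes attain the minimum simultaneously. By Young's theorem, every action profile belonging to any SNE or closed cycle has $\lim_{\varepsilon\to 0}\mu^\varepsilon_a>0$, i.e.\ is stochastically stable. The one point to be careful about is the verification that a single mutation actually suffices between arbitrary action profiles; this is the only place where the structural assumption (all coalitions $S$ are selected with positive probability $p_S>0$, and in particular the grand coalition can jointly move to any $a'\in A$ with positive mutation probability) is used, and it is the main obstacle to keep track of cleanly when writing up the argument.
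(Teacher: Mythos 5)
Your proposal is correct and follows essentially the same route as the paper: both identify the recurrent classes of $P^0$ as the strong Nash equilibria and closed cycles, observe that every inter-class resistance equals $1$ so that every recurrent class has stochastic potential $J-1$, and invoke Young's characterization to conclude that all recurrent classes are stochastically stable. The only cosmetic difference is that the paper explicitly checks the three regularity conditions of Young's Theorem~4 for the family $P^\varepsilon$ inside the proof, whereas you take that characterization as given and instead spell out the resistance computation (which the paper places in the paragraphs preceding the theorem).
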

\begin{proof}
We know that the Markov chain $P^\varepsilon$ is aperiodic and irreducible.
From \eqref{TPM_Static_unperturb} and \eqref{TPM_Static} it is easy to see that
\[
 \lim_{\varepsilon\rightarrow 0} P^\varepsilon(a'|a)=P^0(a'|a), \ \forall \ a, a'\in A.
\]
From \eqref{TPM_Static} it is clear that,  if $P^\varepsilon(a'|a) > 0$ for some $\varepsilon \in (0, \varepsilon_0]$, then we have
$$0<\varepsilon^{-r(a,a')}P^\varepsilon(a'|a)<\infty.$$
Markov chain $P^\varepsilon$ satisfies all three required conditions of Theorem~4 in \cite{Young} from which
it follows that as $\varepsilon \rightarrow 0$, $\mu^\varepsilon$ converges to some stationary distribution $\mu^0$
of $P^0$ and an action profile $a$ is stochastically stable, i.e., $\mu_a^0>0$ if and only if $a$ is contained
in a recurrent class of $P^0$ having minimum stochastic potential.
We know that the recurrent classes of Markov chain $P^0$ are strong Nash equilibria or closed cycles and the stochastic potential of all the recurrent classes are same.
Thus, all the strong Nash equilibria and closed cycles are  stochastically stable.
\end{proof}

\begin{remark}
Since the perturbed process $P^\varepsilon$ satisfies the conditions of Theorem~4 in \cite{Young}
for all functions $f(\cdot)$, the stochastic stability of strong Nash equilibria and closed cycles
is independent of $f(\cdot)$.
\end{remark}

We can have a similar
 CBR dynamics   without mistakes  and with mistakes as given in Sections \ref{Stoc_DP} and \ref{Stoc_per_DP}   respectively, if
 for all $S\in \mathcal{S}$ and $a\in A$  the set of improved action profiles is $\mathcal{I}_2(S,a)$
 as defined by \eqref{Impr_SSNE}.  We have the following result.

\begin{theorem}\label{SSNE_static_games}
All strict strong Nash equilibria and closed cycles of an $n$-player finite strategic game
are stochastically stable under corresponding CBR dynamics.
\end{theorem}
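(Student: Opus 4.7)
The plan is to mirror the proof of Theorem~\ref{SS_static_games}, replacing the improvement set $\mathcal{I}_1(S,a)$ by $\mathcal{I}_2(S,a)$ throughout, and checking that every structural property used in the previous argument still holds. Concretely, I would first write down the unperturbed transition kernel $\widetilde P^0$ and the perturbed kernel $\widetilde P^\varepsilon$ by formulas strictly analogous to \eqref{TPM_Static_unperturb} and \eqref{TPM_Static}, with $\mathcal{I}_1$ and $\overline{\mathcal{I}}_1$ replaced by $\mathcal{I}_2$ and $\overline{\mathcal{I}}_2$ respectively. The CBR dynamics corresponding to SSNE is then a Markov chain on $A$ whose absorbing states are precisely the SSNEs and whose non-singleton recurrent classes are precisely the closed cycles, because, by definition, an action profile $a$ admits no improving deviation in the sense of $\mathcal{I}_2$ if and only if it is an SSNE (as pointed out immediately after Definition~\ref{SNE}), and the analogue of Theorem~\ref{SNE_thm} guarantees that in the absence of an SSNE one obtains a maximal closed cycle.

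Next I would verify that $\widetilde P^\varepsilon$ meets the three hypotheses of Young's Theorem~4. Aperiodicity and irreducibility follow in the same way as before: because mutations put positive probability on every element of $\overline{\mathcal{I}}_2(S,a)\supseteq \{a\}\setminus\{a\}$ for some $S$, and because by exhausting all coalitions one can reach any $a'$ from any $a$ in a single mutated step, the perturbed chain is irreducible; positive self-loop probabilities (for $\varepsilon$ small) give aperiodicity. Pointwise convergence $\widetilde P^\varepsilon(a'|a)\to \widetilde P^0(a'|a)$ as $\varepsilon\to 0$ is immediate from the formula. For the resistance condition, note that every single transition $a\to a'$ is either of probability $O(1)$ (when $a'\in \mathcal{I}_2(S,a)\cup\{a\}$ for some $S$ appearing in the corresponding sum) or of probability $O(\varepsilon)$, and thus $0<\varepsilon^{-r(a,a')}\widetilde P^\varepsilon(a'|a)<\infty$ with $r(a,a')\in\{0,1\}$.

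Having done this, I would argue that the stochastic potential is the same for every recurrent class of $\widetilde P^0$. Because mutations can move from any action profile to any other in a single erroneous step, the resistance $r_{ij}$ between any two recurrent classes $H_i,H_j$ equals $1$. Hence for $J$ recurrent classes, every $a^i$-tree has exactly $J-1$ edges of unit resistance each, yielding stochastic potential $J-1$ for every recurrent class, verbatim as in the previous theorem. Young's theorem then guarantees that $\widetilde\mu^\varepsilon$ converges to a stationary distribution $\widetilde\mu^0$ of $\widetilde P^0$ whose support is the union of the minimum-potential recurrent classes, which in our case is the union of \emph{all} recurrent classes, i.e., all SSNEs and all closed cycles.

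The main obstacle, or at least the only point needing care rather than direct transcription, is checking that the combinatorial statement ``$r_{ij}=1$ for all pairs of recurrent classes'' still goes through with $\mathcal{I}_2$ in place of $\mathcal{I}_1$. This reduces to showing that for any $a,a'\in A$ the move $a\to a'$ is available as a single (possibly erroneous) transition under some coalition $S$; taking $S=N$ and using that $\overline{\mathcal{I}}_2(N,a)=A(N,a)\setminus \mathcal{I}_2(N,a)$ contains $a'$ whenever $a'$ is not a strict Pareto improvement over $a$, while $a'\in \mathcal{I}_2(N,a)$ otherwise, handles this uniformly. With this observation in hand, the rest is a direct application of Young's theorem and the proof is complete.
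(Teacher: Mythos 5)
Your proposal is correct and follows essentially the same route as the paper, which simply notes that the argument of Theorem~\ref{SS_static_games} carries over verbatim with $\mathcal{I}_2$ and $\overline{\mathcal{I}}_2$ in place of $\mathcal{I}_1$ and $\overline{\mathcal{I}}_1$; your verification that all resistances $r_{ij}$ equal $1$ (via the coalition $S=N$) and hence that all recurrent classes share the same stochastic potential is exactly the intended argument. The only blemish is the vacuous inclusion ``$\overline{\mathcal{I}}_2(S,a)\supseteq\{a\}\setminus\{a\}$'', which you presumably meant as $a\in\overline{\mathcal{I}}_2(S,a)$; it does not affect the proof.
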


\begin{proof}
The proof follows from the  similar arguments given in Theorem \ref{SS_static_games}.
\end{proof}

\subsubsection{Equilibrium selection in  coordination games}
First we consider a $2\times 2$ coordination game and discuss which Nash equilibrium is selected by CBR
 dynamics  in the long run when probability of making mistakes vanish.
We compare our equilibrium selection results in $2\times 2$ coordination games with existing results
from  \cite{KMR},\cite{Young}. Later we discuss  equilibrium selection results in general $m\times m$ symmetric coordination games.

Consider a  $2\times 2$ coordination game,
$$
 \begin{matrix}
 s_1 \ ~~~~~~~~~~~~~~s_2 \\
 \renewcommand\arraystretch{1.5}
 \begin{matrix} s_1\\ s_2 \end{matrix}
 \begin{pmatrix}
 (a_{11},b_{11}) ~& (a_{12},b_{12}) \\
  (a_{21},b_{21}) ~& (a_{22},b_{22})
 \end{pmatrix}
 \end{matrix},
$$
where $a_{jk},b_{jk}\in\mathbb{R}$, $j,k\in\{1,2\}$ and $a_{11}>a_{21}$, $b_{11}>b_{12}$, $a_{22}>a_{12}$, $b_{22}>b_{21}$.
$A_i=\{s_1,s_2\}$, $i=1,2$.
Here $(s_1,s_1)$ and $(s_2,s_2)$ are two strict
Nash equilibria. In this game there are two types of Nash equilibria one is payoff dominant and other one is risk dominant.
If $a_{11}> a_{22}$, $b_{11}>b_{22}$, then $(s_1,s_1)$ is payoff dominant and if $a_{11}< a_{22}$, $b_{11}<b_{22}$, then
$(s_2,s_2)$ is payoff dominant. In other cases payoff dominant Nash equilibrium does not  exist.
From \cite{Young}, define,
$$R_1=\min\left\{\frac{a_{11}-a_{21}}{a_{11}-a_{12}-a_{21}+a_{22}}, \frac{b_{11}-b_{12}}{b_{11}-b_{12}-b_{21}+b_{22}}\right\},$$
$$R_2=\min\left\{\frac{a_{22}-a_{12}}{a_{11}-a_{12}-a_{21}+a_{22}},\frac{b_{22}-b_{21}}{b_{11}-b_{12}-b_{21}+b_{22}}\right\}.$$
If $R_1>R_2$, then $(s_1,s_1)$ is risk dominant Nash equilibrium and if $R_2>R_1$, then $(s_2,s_2)$ is risk dominant Nash equilibrium.
A payoff dominant Nash equilibrium is always an SNE. Hence, CBR dynamics  always selects
 payoff dominant Nash equilibrium whenever it exists. When payoff dominant Nash equilibrium  does not exist then both
 the Nash equilibria are
 strong Nash equilibria and in that case CBR dynamics selects both the Nash equilibria.
 While the stochastic dynamics  by  Young \cite{Young} always selects a risk dominant
Nash equilibrium.

A $2\times 2$ symmetric coordination game is considered by Kandori et al. \cite{KMR}. For this game $a_{jk}=b_{kj}$, $j,k\in\{1,2\}$.
In this case there always exists a payoff dominant Nash equilibrium.
Hence CBR dynamics always selects payoff dominant Nash equilibrium which is the only SNE.
While the stochastic dynamics  by  Kandori et al. \cite{KMR} always selects
a risk dominant Nash equilibrium.
Among symmetric coordination games if  we go beyond $2\times 2$ matrix games the result  by Young \cite{Young} cannot be generalized, i.e.,
it need not select a risk dominant Nash equilibrium.
Consider an example of $3\times 3$ matrix game from \cite{Young},
$$
 \begin{matrix}
 s_1 \ ~~~~~~~s_2 \ ~~~~~~s_3\\
 \renewcommand\arraystretch{1.5}
 \begin{matrix} s_1\\ s_2 \\ s_3\end{matrix}
 \begin{pmatrix}
 (6,6) ~& (0,5) ~& (0,0) \\
  (5,0) ~& (7,7) ~& (5,5)\\
  (0,0) ~& (5,5) ~& (8,8)
 \end{pmatrix}
 \end{matrix}.
$$
Here $(s_1,s_1)$, $(s_2,s_2)$ and $(s_3,s_3)$ are three Nash equilibria.
The stochastic dynamics  by Young \cite{Young} selects $(s_2,s_2)$ that is not a risk dominant Nash equilibrium.
A Nash equilibrium of an $m\times m$
symmetric coordination game is risk dominant if it is risk dominant in all pairwise contest   \cite{Harsanyi}.
For above $3\times 3$ game,  the Nash equilibrium $(s_3,s_3)$ is a risk dominant  as well as a payoff dominant and also an
SNE.
Hence, CBR dynamics
selects $(s_3,s_3)$. In fact for all $m\times m$
symmetric coordination game, CBR dynamics always selects a payoff dominant Nash equilibrium because it
is an SNE.

\section{Application to network formation games}\label{net-games}
In this section we consider the network formation games,  see e.g., some recent books
\cite{Jackson2}, \cite{Dutta-Jack}, \cite{Demange-Wooders}.
In general, the networks which are stable against the deviation of all the coalitions are
called as strongly stable networks. In the literature, there are two definitions of strongly stable networks. The first definition is
due to Dutta and Mutuswami \cite{Dutta-Mutu} that is corresponding to SNE. The second definition is due to
Jackson and van den Nouweland
\cite{Jack-Nou} that is corresponding
SSNE. A strongly stable network according to the definition of \cite{Jack-Nou} is also strongly stable network
according to the definition of \cite{Dutta-Mutu}. The definition of a strongly stable network according to Jackson and van den Nouweland
\cite{Jack-Nou} are
more often considered in the literature. We also consider the strong stability of networks according to Jackson and van den Nouweland
 \cite{Jack-Nou}.
We discuss the dynamic formation of
networks over infinite horizon.
We apply the CBR dynamics corresponding to SSNE  to network formation games to discuss the
stochastic stability of networks.

\subsection{The model}
Let $N=\{1,2,\cdots,n\}$ be a finite set of players also called as nodes.  The players are
 connected through  undirected edges.
 An edge can be defined  as a subset of $N$ of size 2, e.g., $\{ij\}\subset N$
 defines an edge between player $i$ and player $j$.
 The collection of edges define a network.
 Let $G$ denotes a set of all networks  on $N$.
  For each $i\in N$, let $u_i:G\rightarrow \mathbb{R}$ be
 a payoff function of  player $i$, where $u_i(g)$ is a payoff of player $i$ at network $g$.

%
To reach from one network to another requires the addition of new links
or the destruction of existing links. It is always assumed in the literature that forming a new link  requires the
consent of both the players while a player can delete a link unilaterally.
The coalition formation in network formation games has also been considered in
the literature.
 Some players in a network can form a coalition and make a joint move
to another network by adding or severing some links, if new network
is at least as beneficial as the previous network
for all the players of coalition and at least one player is strictly benefited (see \cite{Jack-Nou}, \cite{Jack_Survey}).
We recall few definitions from \cite{Jack-Nou} describing the coalitional moves in network
formation games and the stability of networks against all possible coalitional deviations.
\begin{definition}\label{def1}
 A network $g'$ is obtainable from $g$ via deviation by a coalition $S\in \mathcal{S}$ as denoted by $g\rightarrow_S g'$, if
 \begin{enumerate}
  \item  $ij\in g'$ and $ij\notin g$ then $\{i,j\}\subset S$.
  \item  $ij\in g$ and $ij \notin g'$ then $\{i,j\}\cap S \neq \phi$.
 \end{enumerate}
\end{definition}
\noindent The first condition of the above definition
requires that a new link can be added only between the nodes which are the part of a coalition $S$
and the second condition  requires that at least one node of any deleted link has to be a part of a coalition $S$.
We denote $G(S,g)$ as a set of all networks which are obtainable from $g$ via deviation by $S$, i.e.,
$G(s,g)=\{g'|g\rightarrow_S g'\}$.
\begin{definition}
  A deviation by a coalition $S$ from a network $g$ to a network $g'$ is said to be improving
if
\begin{enumerate}
 \item $g\rightarrow_S g'$,
 \item  $u_i(g')\geq u_i(g), \ \forall  \ i\in S$ (with at least one strict inequality).
\end{enumerate}
\end{definition}

We denote $\mathcal{I}_2(S,g)$ as a set of all networks $g'$ which are obtainable from $g$ by an improving deviation of $S$, i.e.,
\begin{align*}
 \mathcal{I}_2(S,g)=\big\{g'| g\rightarrow_S g', u_i(g')\geq u_i(g), \ \forall \ i\in S,  u_j(g')>u_j(g) \ \mbox{for some} \ j\in S \big\}.
 \end{align*}
It is clear that $g\notin \mathcal{I}_2(S,g)$ for all $S$.
We denote $\overline{\mathcal{I}}_2(S,g)=G(S,g)\setminus  \mathcal{I}_2(S,g)$ as a set of all networks
which are obtainable from $g$ due to erroneous decisions of $S$. This set is always nonempty as
$g\in \overline{\mathcal{I}}_2(S,g)$ for all $S$.

\begin{definition}
 A network $g$ is said to be strongly stable
 if it is not possible for any coalition $S$ to make an improving deviation from network $g$ to
 some other network $g'$.
\end{definition}

A strongly stable network need not always exist and in that case there exists some set of networks lying
on a closed cycle and all the networks in a closed cycle can be reached from each other via an improving path.
An improving path and a closed cycle in network formation games can be defined similarly to Definitions
 \ref{impr-path} and \ref{CC}, respectively.
 \begin{theorem}
There exists at least  a strongly stable network or a closed cycle of networks.
\end{theorem}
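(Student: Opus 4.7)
The plan is to mirror the argument of Theorem \ref{SNE_thm}, replacing action profiles by networks, coalition deviations in $A$ by deviations of the form $g\rightarrow_S g'$, and the improvement sets $\mathcal{I}_1(S,a)$ by the network-improvement sets $\mathcal{I}_2(S,g)$. The crucial structural facts I need are that the set $G$ of all networks on $N$ is finite (there are at most $2^{n(n-1)/2}$ of them, since a network is a subset of the pair-set of $N$), and that strong stability of $g$ is equivalent to the nonexistence of any coalition $S$ with $\mathcal{I}_2(S,g)\neq \emptyset$.

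First I would set up a search on the directed graph whose nodes are networks in $G$ and whose directed edges are pairs $(g,g')$ such that $g'\in \mathcal{I}_2(S,g)$ for some $S\in \mathcal{S}$. Starting from an arbitrary network $g^0$: either no outgoing edge exists, in which case $g^0$ is strongly stable and we are done; or some coalition $S_0$ can make an improving deviation to $g^1\in \mathcal{I}_2(S_0,g^0)$. Iterating this, I obtain a sequence $g^0, g^1, g^2, \dots$ Since $G$ is finite, either the sequence hits a strongly stable network, or some network is revisited, producing a cycle in the improving-path graph.

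Now suppose no strongly stable network exists. Then from every network some improving deviation is available, so every maximal forward orbit necessarily returns to a previously visited network, producing at least one cycle. Among all sets of networks $C\subseteq G$ with the property that any two elements of $C$ are connected by an improving path (in the sense of Definition \ref{impr-path}, adapted to networks via $\mathcal{I}_2$), pick one that is maximal with respect to inclusion; such a maximal $C$ exists by finiteness of $G$. By maximality, no element of $C$ lies on an improving path leading outside $C$: if some $g\in C$ admitted an improving path to $g'\notin C$, then from $g'$ (since no strongly stable network exists) one could continue improving indefinitely and, by finiteness, return to some element of $C\cup\{g'\}$, which would force $g'$ to lie in a larger set with the mutual-reachability property, contradicting maximality of $C$. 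Hence $C$ is a closed cycle of networks.

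There is essentially no hard step; the only thing to be careful with is the direction of the deviation relation, because $\mathcal{I}_2$ uses weak inequalities with one strict inequality rather than strict inequality for everyone (as in $\mathcal{I}_1$). This means $g\notin \mathcal{I}_2(S,g)$ still holds, and the improving relation remains strict in the payoff sense (payoffs cannot decrease along any improving path while at least one player strictly gains at each step), so no path can be traversed in reverse for free and the cycle argument is internally consistent. I expect the write-up to be a verbatim transcription of the argument in Theorem \ref{SNE_thm} with the notational substitutions indicated above.
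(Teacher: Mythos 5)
Your proposal is correct and matches the paper's intent exactly: the paper proves this theorem by stating that it "follows from the similar arguments used in Theorem \ref{SNE_thm}," and your write-up is precisely that transcription, with action profiles replaced by the finite set $G$ of networks and $\mathcal{I}_1$ replaced by $\mathcal{I}_2$. The additional remark that payoffs are nondecreasing with at least one strict gain along each improving step, so that $g\notin\mathcal{I}_2(S,g)$, is a sensible sanity check but does not change the argument.
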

\begin{proof}
 The proof follows from the similar arguments used in Theorem \ref{SNE_thm}.
\end{proof}

\subsection{Dynamic network formation}
The paper by  Jackson and Watts \cite{JW} is the first one to consider the
dynamic formation of networks. They considered the case where at each time only a pair of players form a coalition
and only a link between them can be altered.
We consider the situation where  at each time a subset of players form a coalition and deviate from a current network
to a new network if at new network the payoff of each player of the coalition is at least as much as at current network
and at least one player has strictly better payoff.
 This process continues over infinite horizon.
 A coalition can make
all possible changes in the network and as a result more than one link can be created or severed
at each time. So, we consider the following network formation rules by Jackson and van den Nouweland
 \cite{Jack-Nou} given below:
 \begin{itemize}
  \item Link addition is bilateral, i.e., forming a link between player $i$ and player $j$ requires the consent of both players.
  \item Link destruction is unilateral, i.e., severing a link between player $i$ and player $j$ requires that player $i$ or player $j$
  or both agree to sever the link.
  \item At a time more than one link can be created or severed by the players.
 \end{itemize}
The CBR dynamics corresponding to SSNE can be applied to dynamic network formation. That is,  at  time $t$ a network is $g_t$ and a
coalition $S_t$
is selected with probability $p_{S_t}>0$ and it makes an improving deviation to a new network that is at least as beneficial
as $g_t$ for all  players
of coalition $S_t$ and at least one player of $S_t$ is strictly benefited.
  So, at time $t+1$ network is
$g_{t+1}\in \mathcal{I}_2(S_t,g_t)$ with probability $p_{\mathcal{I}_2}(g_{t+1}|S_t,g_t)$. If an improving
deviation is not possible for selected coalition $S_t$, then $g_{t+1}=g_t$. The above process defines  a Markov chain
over state space $G$ and its transition probabilities can be defined similarly to \eqref{TPM_Static_unperturb}.
In general this Markov chain is multichain whose absorbing set is either a strongly stable network
or a recurrent class having more than
one network is a closed cycle of networks.
We can also assume that at each time selected coalition $S_t$ makes error with small probability $f(S_t,g_t)\varepsilon$.
That is, $g_{t+1}\in \mathcal{I}_2(S_t,g_t)$  with probability \mbox{$(1-f(S_t,g_t)\varepsilon) p_{\mathcal{I}_2}(g_{t+1}|S_t,g_t)$}
and $g_{t+1}\in \overline{\mathcal{I}}_2(S_t,g_t)$ with probability $f(S_t,g_t)\varepsilon p_{\overline{\mathcal{I}}_2}(g_{t+1}|S_t,g_t)$.
The transition
probabilities of the perturbed Markov chain can be defined similarly to \eqref{TPM_Static}.
The presence  of mutations makes the Markov chain ergodic for which there exists a unique stationary distribution.
We are interested in the stochastically stable networks, i.e.,
the networks to which positive probabilities are assigned by the stationary distribution as $\varepsilon\rightarrow 0$.
The stochastic stability analysis similar to the one given in Section~\ref{Stoc_per_DP} holds here.
Thus, we have the following result.

\begin{theorem}
All the strongly stable networks and closed cycles of a network formation game
with the corresponding CBR dynamics are stochastically stable.
\end{theorem}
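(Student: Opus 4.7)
The plan is to reduce this result to the abstract framework already established in Theorems \ref{SS_static_games} and \ref{SSNE_static_games}. The network formation game with CBR dynamics is a strategic game whose ``action profile'' space is the finite set $G$ of all networks on $N$, with improving set $\mathcal{I}_2(S,g)$ and mistake set $\overline{\mathcal{I}}_2(S,g)$ playing the roles of $\mathcal{I}_2(S,a)$ and $\overline{\mathcal{I}}_2(S,a)$ in Section~\ref{Stoc_per_DP}. Hence the unperturbed and perturbed transition laws over $G$ have exactly the structure of \eqref{TPM_Static_unperturb} and \eqref{TPM_Static}, and I would simply verify that all three conditions of Young's Theorem~4 \cite{Young} carry over verbatim.

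First, I would check that $P^\varepsilon$ is aperiodic and irreducible on $G$. Irreducibility is the key point: for any two networks $g,g' \in G$, the grand coalition $S = N$ satisfies $g' \in G(N,g)$ by Definition~\ref{def1}, since $N$ contains both endpoints of every link that must be added or severed; therefore a single-step transition from $g$ to $g'$ occurs with probability at least $p_N \cdot f(N,g)\varepsilon \cdot p_{\overline{\mathcal{I}}_2}(g'|N,g) > 0$ (or, if $g' \in \mathcal{I}_2(N,g)$, with a probability of order $1$). Aperiodicity follows because at any $g$ some coalition $S$ with $\mathcal{I}_2(S,g) = \emptyset$ leaves $P^\varepsilon$ a positive self-loop probability. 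Then one shows $\lim_{\varepsilon \to 0} P^\varepsilon(g'|g) = P^0(g'|g)$ from the linear dependence on $\varepsilon$, and that the one-step resistance $r(g,g') \in \{0,1\}$ satisfies $0 < \varepsilon^{-r(g,g')} P^\varepsilon(g'|g) < \infty$ for small $\varepsilon$, exactly as in the proof of Theorem~\ref{SS_static_games}.

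Next, the recurrent classes of $P^0$ are identified: a strongly stable network $g^*$ has $\mathcal{I}_2(S,g^*) = \emptyset$ for all $S$, so it is an absorbing state; and, by the analogue of Theorem~\ref{SNE_thm} for the network setting (cited just before the present statement), every non-singleton recurrent class of $P^0$ is a closed cycle of networks, because the positive probabilities $p_S > 0$ make every improving path between two networks in the cycle realizable under $P^0$. Finally, on the reduced graph of the $J$ recurrent classes, the grand-coalition argument again gives $r_{ij} = 1$ between any two distinct classes, so every $g^i$-tree has resistance $J-1$. All recurrent classes thus share the same (minimum) stochastic potential, and Young's Theorem~4 yields $\mu^0_g > 0$ for every $g$ in a recurrent class.

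The main obstacle I anticipate is the irreducibility step: one must invoke Definition~\ref{def1} to verify that the grand coalition $N$ can in fact realize every network-to-network transition in a single step, so that one mutation is always sufficient to cross between recurrent classes. Once this is in hand, the rest of the argument is a direct transcription of the proof of Theorem~\ref{SS_static_games}.
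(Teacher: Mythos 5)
Your proposal is correct and follows essentially the same route as the paper, which simply defers to Theorem~\ref{SSNE_static_games} (itself a transcription of the proof of Theorem~\ref{SS_static_games}); you carry out that reduction explicitly. The one substantive detail you add --- verifying via Definition~\ref{def1} that the grand coalition $N$ can realize any network-to-network transition in a single step, so that irreducibility and the resistance bound $r(g,g')\in\{0,1\}$ carry over --- is exactly the point the paper leaves implicit, and your check of it is valid.
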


\begin{proof}
The proof follows directly from Theorem \ref{SSNE_static_games}.
\end{proof}

\section{Conclusions}\label{conclusion}
We introduce coalition formation among players in an $n$-player strategic game over infinite horizon and propose
a CBR dynamics. The mutations are present in the dynamics due to erroneous decisions taken by the coalitions.
We prove that all strong Nash equilibria and closed cycles of action profiles are stochastically stable, i.e.,
they are selected by the CBR dynamics as mutations vanish. Similar development holds for strict strong Nash
equilibria. We applied CBR dynamics to network formation games and prove that all strongly stable networks
and closed cycles of networks are stochastically stable.

\section*{Acknowledgements}

This research was sponsored by the European Commission within the framework
of the CONGAS project FP7-ICT-2011-8-317672.

%

\bibliographystyle{plain}
\bibliography{BibliographyExample}
\newpage
\appendix
\section{Algorithm for computing strong Nash equilibrium}\label{algo}
We give a finite step algorithm that computes an SNE whenever it exists. If an SNE does not exist
then in finite number of  steps the algorithm 
 confirms that there is no SNE.  
 From the definition of SNE
an action profile $a$ is an SNE if there is no improved action profile $a'\neq a$ 
for any coalition $S\in \mathcal{S}$, i.e., $\mathcal{I}_1(S,a)=\phi$ for all $S\in \mathcal{S}$.

\begin{algorithm}[]
\caption{}\label{algo1}
\begin{algorithmic}[1]
\STATE Choose $a\in A$. \label{action_profile}

\STATE Choose $S\in \mathcal{S}$. \label{Coalition}
\STATE Choose $a'_S\in A_S$. \label{profile_Coaltion}

\IF{$u_i(a'_S,a_{-S})>u_i(a), \ \forall \ i\in S$} \label{payoff-cond}

\STATE $A=A\setminus\{a\}$.
\IF {$|A|=0$} 
\STATE Go to Step \ref{alglaststep} 
\ELSE
\STATE Go to Step \ref{action_profile}. 
\ENDIF

\ELSE
\STATE $A_S=A_S\setminus \{a'_S\}$
\IF{$|A_S|=0$}
\STATE $\mathcal{S}=\mathcal{S}\setminus S$.
\IF{$|\mathcal{S}|=0$} 
\STATE Go to Step \ref{SNE_step}.
\ELSE
\STATE Go to Step \ref{Coalition} 
\ENDIF
\ELSE 
\STATE Go to Step \ref{profile_Coaltion}
\ENDIF

\ENDIF
\STATE  Strong Nash equilibrium does not exist\label{alglaststep}
\STATE  $a$ is Strong Nash equilibrium \label{SNE_step}
\end{algorithmic}
\end{algorithm}
\noindent The  Algorithm  \ref{algo1} terminates in finite number of steps because $A$ and $\mathcal{S}$ are finite.
 If we replace  Step \ref{payoff-cond} of the Algorithm \ref{algo1}  by $u_i(a'_S,a_{-S})\ge u_i(a), \ \forall \ i\in S$ together with
 at least one strict inequality, then Algorithm \ref{algo1} computes an SSNE whenever it exists.   
%
\newpage
\tableofcontents
\end{document}